\title{Decision Problem of Some Bundled Fragments}
\date{\today}
\author{Mo Liu}
\institute{Peking University}
\begin{document}

\maketitle
\section{Intro}

Bundled fragments bind first-order quantifiers and modalities together to be a new operator. The idea comes from \cite{Wang2017A} and it shows that one bundled fragment ``$\exists\Box$-fragment'' is decidable and has strong expressivity. As is known, fragments of first-order modal logic is mostly undecidable. So we are interested in the decision problem of other bundled fragments. The results by now are listed in the following table: \\

\centerline{
    \begin{tabular}{|c|c|c|}
\hline
     Fragment &Increasing Domain &Constant Domain  \\
\hline
     $\exists\Box$ & Decidable &Decidable\\
     &\cite{Wang2017A} &\cite{Padmanabha2018} \\
\hline
$\forall\Box$& Decidable & Undecidable \\
& \cite{Padmanabha2018} &\cite{Padmanabha2018} \\
\hline
$\Box\exists$& Decidable &  Unknown\\ &  &($\mathcal{L}_{\Box\exists^2}$ is Undecidable ) \\
\hline
$\Box\forall$ & Decidable & Undecidable \\
& &\cite{Padmanabha2018} \\
\hline
\end{tabular}
}

\section{Languages}
We have four bundled-fragments of first-order modal logic. 
\begin{definition}[$\mathcal{L}_{\exists\Box}$]
Given a set of predicates $\mathbf{P}$ and a set of variables $\mathbf{X}$, 
$$\varphi::= Px_1\cdots x_n\mid \neg\varphi\mid (\varphi\land\varphi)\mid \exists x\Box\varphi $$
where $P\in\mathbf{P}$ and $x_1,\cdots,x_n\in\mathbf{X}$
\end{definition}

\begin{definition}[$\mathcal{L}_{\forall\Box}$]
Given a set of predicates $\mathbf{P}$ and a set of variables $\mathbf{X}$, 
$$\varphi::= Px_1\cdots x_n\mid \neg\varphi\mid (\varphi\land\varphi)\mid \forall x\Box\varphi $$
where $P\in\mathbf{P}$ and $x_1,\cdots,x_n\in\mathbf{X}$
\end{definition}

\begin{definition}[$\mathcal{L}_{\Box\exists}$]
Given a set of predicates $\mathbf{P}$ and a set of variables $\mathbf{X}$, 
$$\varphi::= Px_1\cdots x_n\mid \neg\varphi\mid (\varphi\land\varphi)\mid \Box\exists x\varphi $$
where $P\in\mathbf{P}$ and $x_1,\cdots,x_n\in\mathbf{X}$
\end{definition}

\begin{definition}[$\mathcal{L}_{\Box\forall}$]
Given a set of predicates $\mathbf{P}$ and a set of variables $\mathbf{X}$, 
$$\varphi::= Px_1\cdots x_n\mid \neg\varphi\mid (\varphi\land\varphi)\mid \Box\forall x\varphi $$
where $P\in\mathbf{P}$ and $x_1,\cdots,x_n\in\mathbf{X}$
\end{definition}

\section{Semantics}
\begin{definition}[Increasing domain models]
An increasing domain model is a tuple $\mathcal{M}=(W,R,D,\delta,\{V_w\}_{w\in W})$, where 
\begin{itemize}
    \item $W$ is an non-empty set. 
    \item $R$ is a binary relation on $W$. 
    \item $D$ is an non-empty set. 
    \item $\delta: W\rightarrow \mathbf{2}^D$ is a function: for every $w\in W$ assigns a subset of $D$, and if $wRv$ then $\delta(w)\subseteq\delta(v)$.  
    \item For every $w\in W$, $V_w:\mathbf{P}\rightarrow \bigcup_{n\in\omega} \mathbf{2}^{D^n}$ if a function: for every $n$-predicate assigns a subset of $D^n$ as its interpretation. 
\end{itemize}
\end{definition}

$\delta(w)$ is the \textbf{local domain} of $w$, also noted as $D(w)$. if for any two states $w,v\in W$: $D(w)=D(v)$ holds, then $\mathcal{M}$ is a constant domain model. A \textbf{constant domain model} can be seen as $\mathcal{M}=(W,R,D, \{V_w\}_{w\in W}$). A \textbf{valuation} $\sigma$ is a function which for every variable $x\in \mathbf{X}$ assigns a element in $D$. 

\begin{definition}[Semantics]
Let $\mathcal{M}=(W,R,D,\delta,\{V_w\}_{w\in W})$ be a increasing domain model, for any $w\in W$ and any valuation $\sigma$,  
\begin{align*}
    \mathcal{M},w,\sigma &\vDash Px_1\cdots x_n \Leftrightarrow (\sigma(x_1),\cdots, \sigma(x_n))\in V_w(P) \\
    \mathcal{M},w,\sigma &\vDash\neg\varphi \Leftrightarrow \ \text{not} \ \mathcal{M},w,\sigma\vDash \varphi \\ 
    \mathcal{M},w,\sigma &\vDash\varphi\land\psi \Leftrightarrow \mathcal{M},w,\sigma \vDash\varphi \ \text{and} \ \mathcal{M},w,\sigma\vDash\psi\\ 
    \mathcal{M},w,\sigma &\vDash\exists x\Box\varphi \Leftrightarrow \ \text{there is a} \ d\in \delta(w) \ \text{such that for any} \ v\in W: \ \text{if} \ wRv, \ \text{then} \ \mathcal{M},v,\sigma(d/x)\vDash\varphi  \\
    \mathcal{M},v,\sigma &\vDash \forall x\Box\varphi \Leftrightarrow \ \text{for all} \  d\in\delta(w) \ \text{and for all} \ v\in W: \ \text{if} \ wRv \ \text{then} \ \mathcal{M},v,\sigma(d/x)\vDash\varphi \\
    \mathcal{M},w,\sigma &\vDash \Box\exists x\varphi \Leftrightarrow \ \text{for all} \ v\in W \ \text{with} \ wRv, \ \text{there is a} \ d\in\delta(v) \ \text{ such that} \ \mathcal{M},v,\sigma(d/x)\vDash\varphi \\
    \mathcal{M},w,\sigma &\vDash \Box\forall x\varphi\Leftrightarrow \ \text{for all} \ v\in W \ \text{with} \ wRv \ \text{ and for all} \ d\in \delta(v): \mathcal{M},v,\sigma(d/x)\vDash\varphi
\end{align*}

\end{definition}

\section{Decidability results}
For the convenience of proving, we introduce positive normal form (PNF) and bind two bundled-fragment together. 

\begin{definition}[$\mathcal{M}_{\exists\Box\forall\Box}$-PNF]
$$\varphi::= P\overline{x}\mid \neg P\overline{x}\mid(\varphi\land\varphi)\mid(\varphi\lor\varphi)\mid\exists x\Box\varphi\mid\exists x\Diamond \varphi\mid\forall x\Box\varphi \mid\forall x\Diamond\varphi$$
\end{definition}

\begin{definition}[$\mathcal{M}_{\Box\exists\Box\forall }$-PNF]
$$\varphi::= P\overline{x}\mid \neg P\overline{x}\mid(\varphi\land\varphi)\mid(\varphi\lor\varphi)\mid\Box\exists x\varphi\mid\Diamond\exists x\varphi\mid\Box\forall x\varphi \mid\Diamond \forall x\varphi$$
\end{definition}

Clearly, every bundled-fragment FOML formula can be rewritten into an equivalent formula in PNF. We call a formula \textbf{clean} if no variable occurs both bound and free in it and every use of a quantifier quantifies a distinct variable. It's not hard to see that we can re We call formulas in forms $P\overline{x}$ and $\neg P\overline{x}$ as \textit{literals}, and we use ``\textit{lit}'' to denote the set of all literals. 

\begin{definition}[Tableau]

 A tableau is a tree structure $T=(W,V,E,\lambda)$ where $W$ is a finite set, $(V,E)$ is a rooted tree and $\lambda:V\rightarrow L$ is a labelling map. Each element in $L$ is of the form $(w:\Gamma,\sigma)$, where $w\in W$, $\Gamma$ is finite set of formulas and $\sigma$ is a partial identity mapping on $\mathbf{X}$. Then intended meaning of the label is that the node constitutes a world $w$ that satisfies the formulas in $\Gamma$ with the assignment $\sigma$. 
 \end{definition}

\begin{theorem}
The Satisfiability problem of $\mathcal{L}_{\exists\Box\forall\Box}$ fragment over increasing domain models is decidable. 
\end{theorem}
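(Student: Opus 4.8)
The plan is to establish decidability via a terminating tableau calculus, extending the constructions used for $\mathcal{L}_{\exists\Box}$ in \cite{Wang2017A} and for $\mathcal{L}_{\forall\Box}$ in \cite{Padmanabha2018} so that the operators $\exists\Box$, $\exists\Diamond$, $\forall\Box$, $\forall\Diamond$ are handled simultaneously. First I would note that it suffices to decide satisfiability of clean formulas in $\mathcal{M}_{\exists\Box\forall\Box}$-PNF, since the rewriting into PNF is effective and preserves satisfiability. Then I would fix tableau rules over labels $(w:\Gamma,\sigma)$: a conjunction rule and a branching disjunction rule that decompose propositional structure while keeping $w$ and $\sigma$ fixed, applied until a node is propositionally saturated; and one \emph{modal step} rule that, from a saturated node, produces all of its successor nodes at once. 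The modal step is where bundling is exploited. For each $\exists x\Box\varphi\in\Gamma$ I would choose a single \emph{fresh} witness $c_\varphi$ and push $\varphi[c_\varphi/x]$ into \emph{every} child (reflecting ``there is $d$ such that for all $v$''); for each $\forall x\Box\varphi\in\Gamma$ push $\varphi[d/x]$ into every child for every $d$ in the current active domain; create one child per formula $\exists x\Diamond\varphi\in\Gamma$ carrying $\varphi$ with a fresh witness, one child per pair of a domain element $d$ and a formula $\forall x\Diamond\varphi\in\Gamma$ carrying $\varphi[d/x]$, and one extra ``generic'' child representing a fresh domain element; finally update $\sigma$ and the active domain at each child by adjoining the new witnesses. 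A node is \emph{closed} if, after applying $\sigma$, its formula set contains a literal and its negation, and $\varphi$ is declared satisfiable exactly when there is a saturated tableau containing a subtree that resolves every disjunction choice and has no closed branch.

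For termination I would track two quantities. The modal step strictly decreases modal depth: every formula at tableau depth $k$ is a substitution instance of a subformula of the input $\varphi$ whose modal depth is at most $\mathrm{md}(\varphi)-k$, so the tree has depth at most $\mathrm{md}(\varphi)$, and no loop-checking is needed. Along any branch the witnesses introduced by $\exists\Box$- and $\exists\Diamond$-formulas accumulate, but each modal step contributes at most $|\varphi|$ of them over at most $\mathrm{md}(\varphi)$ steps, so the active domain at any node has size bounded by a polynomial in $|\varphi|$; consequently the set of possible labels $(w:\Gamma,\sigma)$ is finite (subformulas of $\varphi$ with variables substituted from a bounded pool, closed under subsets) and the branching is finite. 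Hence the tableau is a finite tree and the search is effective.

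For soundness, from a saturated non-closing tableau I would read off an increasing-domain model: $W$ the set of nodes, $R$ the child relation, $D$ the set of all witnesses together with a designated generic element, $\delta(w)$ the set of witnesses occurring at $w$ or at an ancestor of $w$ (monotone by construction), and $V_w$ determined by the literals in $\lambda(w)$ with all remaining atoms false; a routine induction on formulas, using that every standing $\forall x\Box$ and $\forall x\Diamond$ obligation was instantiated at all current domain elements and at the generic child, shows $\mathcal{M},w,\sigma\vDash\Gamma$ at the root, so $\varphi$ is satisfiable. For completeness, given $\mathcal{M},w,\sigma\vDash\varphi$ I would build a non-closing tableau guided by $\mathcal{M}$: resolve each disjunction by a true disjunct, instantiate each $\exists\Box$- and $\exists\Diamond$-witness by the element the semantics supplies, and realise each $\Diamond$-child by an actual $R$-successor; truth of labels is preserved along the construction, so no branch closes.

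The step I expect to be the main obstacle is the soundness direction together with the exact design of the active-domain bookkeeping: one must guarantee that the extracted model satisfies every $\forall x\Box\varphi$ and $\forall x\Diamond\varphi$ even though new witnesses are born at deeper nodes and then, by monotonicity of $\delta$, belong to the local domains of shallower nodes — which forces the modal rule to re-instantiate all standing universal obligations against freshly created witnesses, and one must check that this closes up consistently without reintroducing unboundedly many formulas. Matching this interaction of shared $\exists\Box$-witnesses (one $d$ serving every successor) with the per-element $\forall\Diamond$-successors, and keeping the rule synchronised between the soundness and completeness constructions, is the delicate point; the remaining arguments are bookkeeping parallel to the single-operator fragments.
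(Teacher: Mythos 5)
Your overall strategy is the same as the paper's (and as the cited source \cite{Padmanabha2018}): a depth-bounded tableau whose modal step realises all $\Diamond$-bundles as children, shares each $\exists x\Box$-witness across all children, instantiates each $\forall x\Box$-obligation over the current active domain, and terminates because modal depth strictly decreases while the witness pool grows only linearly per level. The model-extraction and rule-soundness directions you sketch also match the paper's proof shape. However, there is one concrete point where your rule, as stated, fails: the unconditional ``generic'' child. In the increasing-domain semantics a world may have no successors, and then every $\exists x\Box\varphi$ and $\forall x\Box\varphi$ holds vacuously; the paper exploits this with an explicit $(END)$ rule that simply discards the remaining $\Box$-bundles when no $\Diamond$-bundle is present. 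If you always manufacture a successor and push the instantiated $\Box$-obligations into it, you destroy completeness: the satisfiable formula $\exists x\Diamond\,\exists y\Box(Py\land\neg Py)$ (true at a world whose unique successor is a dead end) would close in your calculus, since the generic grandchild receives $Pc\land\neg Pc$. The generic child is also unnecessary for the purpose you seem to intend, because $\forall x\Box$ and $\forall x\Diamond$ at $w$ quantify only over $\delta(w)$, never over elements created later.

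Two smaller corrections. First, your stated ``main obstacle'' has the monotonicity backwards: with increasing domains, witnesses born at deeper nodes enter the local domains of \emph{deeper} worlds, not shallower ones, so no back-propagation of universal obligations is needed --- this is precisely why the increasing-domain case is decidable while the constant-domain case is not. Second, the witness $c_\varphi$ for $\exists x\Box\varphi$ at $w$ must be placed in $\delta(w)$ itself, not only in the children's domains; if you adjoin new witnesses only ``at each child,'' the extracted model has no $d\in\delta(w)$ to serve as the existential witness and the truth lemma fails for $\exists x\Box$. The paper handles this by counting the extended assignment $\sigma'$ as determining the local domain of $w$ as well as of its children. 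With the $(END)$ rule restored, the generic child deleted, and the witness placed in $\delta(w)$, your argument goes through along the paper's lines.
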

\begin{proof}
See \cite{Padmanabha2018}
\end{proof}
Following the method in \cite{Padmanabha2018}, we can show $\Box\exists$-fragment and $\Box\forall$-fragment are also  decidable by modifying some tableau rules. 

\begin{definition}
Tableau rules for increasing domain models for the $\mathcal{L}_{\Box\exists\Box\forall}$ fragment are given by: 
\\ 

\begin{tabular}{|c c|}
\hline
\multicolumn{2}{|c|}{} \\
     $\dfrac{w:\phi_1\lor\phi_2,\Gamma,\sigma}{w: \phi_1,\Gamma,\sigma\mid w:\phi_2,\Gamma,\sigma}\ (\lor)$ &  $\dfrac{w:\phi_1\land\phi_2,\Gamma,\sigma}{w:\phi_1,\phi_2,\Gamma,\sigma} \ (\land)$ \\
    \multicolumn{2}{|c|}{} \\
     \hline
     \hline
      \multicolumn{2}{|c|}{Given $n_1,n_2,s\geq 0$ and $m_1,m_2\geq 1$:} \\
      \multicolumn{2}{|c|}{}\\
      \multicolumn{2}{|c|}{$\cfrac{\genfrac{}{}{0pt}{0}{w:\Box\exists x_1\phi_1,\cdots,\Box\exists x_{n_1}\phi_{n_1},\Box\forall y_1\psi_1,\cdots,\Box\forall y_{n_2}\psi_{n_2},}{\Diamond\forall z_1\alpha_1,\cdots,\Diamond \forall z_{m_1}\alpha_{m_1},\Diamond\exists u_1\beta_1,\cdots,\Diamond\exists u_{m_2}\beta_{m_2},} \ r_1,\cdots,r_s,\sigma}{\genfrac{}{}{0pt}{0}{\{(wv_{z_i}:\phi_1,\cdots,\phi_{n_1},\{\psi_j^*[x/y_j]\mid j\in[1,n_2],x\in\sigma'\},\{\alpha_i[y/z_i]\mid y\in\sigma'\},\sigma')\mid i\in [1,m_1]\}\bigcup }{\{(wv_{u_k}:  \phi_1,\cdots,\phi_{n_1},\{\psi_j^*[x/y_j]\mid j\in[1,n_2],x\in\sigma'\},\beta_k,\sigma')\mid k\in[1,m_2]\}}} \quad (BR)$} \\ 
      \multicolumn{2}{|c|}{} \\ 
      \hline
      \hline
      \multicolumn{2}{|c|}{Given $n_1,n_2\geq 1, s\geq 0$:} \\
      \multicolumn{2}{|c|}{} \\
      \multicolumn{2}{|c|}{$\dfrac{w:\Box\exists x_1\phi_1,\cdots,\Box\exists x_{n_1}\phi_{n_1},\Box\forall y_1\psi_1,\cdots,\Box\forall y_{n_2}\psi_{n_2},r_1,\cdots,r_k,\sigma}{w:r_1,\cdots,r_k,\sigma} \quad (END)$} \\
      \multicolumn{2}{|c|}{} \\
      \hline
     \multicolumn{2}{c}{where $\sigma'=\sigma\cup\{(x_i,x_i)\mid i\in[1,n_1]\}\cup\{u_j\mid j\in [1,m_2]\}$ and $r_1,\cdots,r_s\in lit$}
\end{tabular}
\end{definition}

\begin{theorem}
For any clean $\mathcal{L}_{\Box\exists\Box\forall}$ PNF formula $\theta$, there is an open tableau with the root $(r:\{\theta\},\sigma_r)$ where $dom(\sigma_g)=\{x\mid x \ \text{is free in} \ \theta \cup \{z\}$ ($z$ is a variable doesn't occur in $\theta$), if and only if $\theta$ is satisfiable in an increasing domain model. 
\end{theorem}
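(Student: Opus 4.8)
The plan is to prove the two directions of the equivalence separately. For the ``only if'' direction (from an open tableau to satisfiability) I will read an increasing domain model off the tableau tree; for the ``if'' direction (from satisfiability to an open tableau) I will let a satisfying model choose, at each step, which rule instance to apply.

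\emph{From an open tableau to a model.} The first observation is that the rules $(\lor)$, $(\land)$ and $(END)$ leave both the world name and the assignment of a node unchanged, so only $(BR)$ creates new world names $wv_{z_i},wv_{u_k}$ and replaces $\sigma$ by a larger assignment $\sigma'\supseteq\sigma$. Hence the nodes of an open tableau $T$ split into \emph{blocks}, each carrying one world name $w$ and one assignment $\sigma_w$, with the $(BR)$-edges running between blocks. I would define $\M$ to have the blocks as worlds, put $wRv$ whenever a $(BR)$-application inside block $w$ created $v$, set $\delta(w)=\mathrm{dom}(\sigma_w)$, take $D$ to be the (finite) union of all the $\delta(w)$, and declare $\bar d\in V_w(P)$ iff $P\bar x$ occurs somewhere in block $w$ with $\sigma_w(\bar x)=\bar d$; openness (no block contains a literal together with its negation) together with $\sigma_w$ being a partial identity makes this valuation verify all the literals of each block, and since every $(BR)$-step only enlarges the assignment, $wRv$ forces $\delta(w)\subseteq\delta(v)$, so $\M$ is indeed an increasing domain model. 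The core of this direction is a truth lemma: every formula occurring in a block with world name $w$ is true at $\M,w,\sigma_w$ (the assignment always covers the free variables of that formula). I would prove it by induction on the formula. Literals hold by construction; the Boolean cases use that $(\land)$ keeps both conjuncts and $(\lor)$ keeps one disjunct, in the same block. For a modal formula the argument splits on which rule closes the block: if it is $(END)$ (or the block is a pure-literal leaf) then $w$ has no $R$-successor and every $\Box$-formula is vacuously true; if it is $(BR)$, then each $R$-successor $v$ of $w$ contains, for every premise $\Box\exists x_i\phi_i$, the body $\phi_i$ with $x_i\in\mathrm{dom}(\sigma_v)$ available as a witness, and, for every $\Box\forall y_j\psi_j$, all the instances $\psi_j^{*}[x/y_j]$ with $x$ ranging over $\mathrm{dom}(\sigma_v)=\delta(v)$; moreover the successor $wv_{z_i}$ designated for $\Diamond\forall z_i\alpha_i$ contains $\alpha_i[y/z_i]$ for all $y\in\delta(wv_{z_i})$, and the successor $wv_{u_k}$ designated for $\Diamond\exists u_k\beta_k$ contains $\beta_k$ with witness $u_k$. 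Plugging the induction hypothesis at these successors into the semantic clauses gives exactly the truth of the modal formula at $w$; applying the truth lemma to the root node yields $\M,r,\sigma_r\vDash\theta$.

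\emph{From a model to an open tableau.} Suppose $\M',\hat r,\tau\vDash\theta$ for some increasing domain model $\M'$, where $\tau$ is defined on the free variables of $\theta$ and on the spare variable $z$ prescribed for the root (for the latter one uses, after a harmless adjustment if necessary, an element of the root's local domain). I would build the tableau top-down, applying at each node the rule instance that $\M'$ recommends, maintaining the invariant that every node $(w:\Gamma,\sigma)$ is accompanied by a world $\hat w$ of $\M'$ and an $\M'$-assignment $\tau$ with $\M',\hat w,\tau\vDash\gamma$ for all $\gamma\in\Gamma$ and $\tau$ compatible with the partial identity $\sigma$. The invariant survives $(\land)$ immediately, $(\lor)$ by choosing the disjunct true at $\hat w$, and $(BR)$ by routing the manufactured successor $wv_{z_i}$ to an $\M'$-successor of $\hat w$ that witnesses $\Diamond\forall z_i\alpha_i$, routing $wv_{u_k}$ to one that witnesses $\Diamond\exists u_k\beta_k$, and then reading off from $\M'$ which domain elements to assign to the fresh variables $x_i$ (a witness for $\Box\exists x_i\phi_i$ at that successor) and $u_k$; the increasing-domain clause of the semantics is exactly what guarantees that the $\Box$-formulas of the block still hold at these successors, so the invariant persists. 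Along any branch the modal depth of the formulas strictly drops at each $(BR)$-edge and only boundedly many Boolean steps sit between consecutive $(BR)$-edges, so every branch is finite; since the tree is finitely branching (one child per $\Diamond$-formula at a $(BR)$-step), it is finite, hence a genuine tableau. The invariant finally excludes any literal clash in a node, because no world of $\M'$ satisfies a literal and its negation at once, so the tableau produced is open.

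\emph{The hard part.} I expect the main obstacle to be the truth lemma in the first direction, and within it the bookkeeping of variables and assignments across $(BR)$: one must use that $\theta$ is clean --- and that cleanness is preserved by all the rules --- to rule out capture in the substitutions $\psi_j^{*}[x/y_j]$ and $\alpha_i[y/z_i]$, to know the witness variables $x_i,u_k$ are genuinely fresh, and to be sure they land in exactly the local domains needed to satisfy the existential modal clauses without breaking the increasing-domain constraint. Getting the termination measure right in the second direction (``modal depth first, then the size of the Boolean part'') is the other place that needs a little care, though it is otherwise routine.
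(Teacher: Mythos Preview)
Your proposal is correct and follows essentially the same approach as the paper: read an increasing-domain model off the world names of an open tableau and prove a truth lemma, then for the converse show that the rules preserve satisfiability (your ``invariant maintained by a guiding model'' is exactly that). The only cosmetic differences are that the paper runs the truth-lemma induction over the tree (from leaves to root) rather than over formula complexity, and isolates cleanness preservation as a preliminary claim \textbf{(C)} up front rather than deferring it to the ``hard part''; your explicit termination argument is something the paper leaves implicit.
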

\begin{proof}
Let $T$ be any tableau starting from $(r:\{\theta\},\sigma_r)$. For any node on $T$ with the label $(w:\Gamma,\sigma)$. We can prove ``\textbf{(C)}: $\bigwedge\Gamma$ is clean. '' by induction on $T$ from the root. The method is similar with the proof in Theorem 1. \\ 

We define an increasing model $\mathcal{M}=(W,D,\delta,R,\{V_w\}_{w\in W})$, where 
\begin{itemize}
    \item $W=\{w\mid (w:\Gamma,\sigma)\in \Gamma \text{for some} \ \Gamma,\sigma \}$. 
    \item $\delta(w_=Dom(\sigma_w)$
    \item $D=\bigcup_{w\in W}\delta(w)$ 
    \item $wRv$ iff there is a $v'$ such that $v=wv'$
    \item $\overline{x}\in V_w(P)$ iff $P\overline{x}\in\Gamma$ 
\end{itemize}
As $g\in W$ and $z\in D$, $W$ and $D$ are non-empty. If $wRv$, then $dom(\sigma_w)\subseteq dom(\sigma_v)$. Since $T$ is an open tableau, this model is well-defined. We show that $\mathcal{M},r,\sigma$ is a model for $\theta$. We make an induction on $T$ from leaf nodes to show that for all nodes $(w: g,\sigma)$ in $T$, $\mathcal{M},w,\sigma\vDash\Gamma$.  \\
\begin{itemize}
    \item For all leaf nodes $(w: r_1,\cdots, r_k,\sigma)$, there are only literals in their labels. By the definition of $V$, for every literal $r_i (i\in [1,k]$), we have $r_i\in \Gamma$ iff $\mathcal{M},w,\sigma\vDash r_i$. 
    \item For the nodes by using $(\land)$ and $(\lor)$ rules, it's straightforward by IH. 
    \item For the nodes by using the $(END)$ rule, by IH $\mathcal{M},w,\sigma\vDash r_1\land\cdots\land r_k$, and as $w$ has no successor, we have $\mathcal{M},w,\sigma\vDash \Box\exists x_1\varphi_1\land\cdots\land\Box\exists x_{n_1}\varphi_{n_1}\land \Box\forall y_1\psi_1\land\cdots\land \Box\forall y_{n_2}\psi_{n_2}\land r_1\land\cdots\land r_k$. 
    \item For any node by using $(BR)$ rule, suppose its label is $(w:\Gamma,\sigma)$, and \begin{multline*}
    \Gamma= \{\Box\exists x_i\phi_i\mid i\in [1,n_1]\}\cup\{\Box\forall y_j\psi_j\mid j\in[1,n_2]\} \ \cup \\ 
    \{\Diamond\forall z_k\alpha_k\mid k\in [1,m_1]\}\cup\{\Diamond\exists u_l\beta_l\mid l\in[1,m_2]\}\cup\{r_1,\cdots,r_s\}
    \end{multline*} 
    By IH, for any $i\in [1,n_1]$ $\mathcal{M},wv_{z_i},\sigma'\vDash\{\varphi_1,\cdots,\varphi_{n_1},\{\psi_j[x/y_j]\mid j\in[1,n_2],x\in\sigma'\},\{\alpha_i[y/z_i]\mid y\in\sigma'\}$ and for any $l\in [1,m_2]$, $\mathcal{M},wv_{u_l},\sigma'\vDash\{\varphi_1,\cdots,\varphi_{n_1},\{\psi[x/j_j]\mid j\in[1,n_2], x\in dom(\sigma')\},\beta_l$. We need to show $\mathcal{M},w,\sigma\vDash\bigwedge\Gamma$. For literals in $\Gamma$, it's similar to the case of leaf nodes. \\
    
    For $\Box\exists x_i\varphi_i (i\in[1,n_1])$, given any successor $wv_{z_k}$ or $wv_{u_l}$ of $w$ let $x_i$ be the witness. Obviously, $x_i \in dom(\sigma)=\delta(wv_{z_k})=\delta(wv_{u_l})$. Note that the difference between $\sigma$ and $\sigma'$ is $\{(x_i,x_i)\mid i\in [1,n_i]\}$ and $\{(u_l,u_l)\mid l\in[1,m_2]$. By (C), for all $a\neq i$, $x_a$ doesn't occur in $\varphi_i$ and any $u_l (l\in[1,m_2])$ doesn't occur in $\varphi_1$. By Ih, $\mathcal{M},wv_{z_k},\sigma'\vDash\varphi_i$, thus $\mathcal{M},wv_{z_k},\sigma(x_i/x_i)\vDash\varphi_i4$; Similarly, $\mathcal{M}.wv_{u_r},\sigma(x_i,x_i)\vDash\varphi$. Therefore, $\mathcal{M},w,\sigma\vDash\Box\exists x_i\varphi_i$. \\
    
    For $\Box\forall y_j\psi_j\ (j\in[1,n_2])$, every $w$'s successor $wv_{z_k}$ or $wv_{u_r}$, $dom(\sigma')=\delta(wv_{z_k})=\delta(wv_{u_r})$. For any element $x\in dom(\sigma')$, by IH $\mathcal{M},wv_{z_k},\sigma'\vDash\phi_j[x/y_j]$ and $\mathcal{M},wv_{u_r},\sigma\vDash\psi_j[x/y_j]$. By (C),$\bigwedge\Gamma$ is clean, thus $x$ is sub-free with respect to $y_j$ in $\psi_j$. Thus $\mathcal{M},wv_{z_k},\sigma'(x/y_j)\vDash\psi_j$ and $\mathcal{M},wv_{u_r},\sigma'(x/y_j)\vDash\psi_j$. By (C) again, $\sigma'(x/y_j)$ and $\sigma(x/y_j)$ are agree with respect to $\psi_j$, thus $\mathcal{M},wv_{z_k},\sigma(x/y_j)\vDash\psi_j$ and $\mathcal{M},wv_{u_r},\sigma(x/y_j)\vDash\psi_j$. Since $x$ is arbitrary, we have $\mathcal{M},w,\sigma\vDash\Box\forall y_j \psi_j$. \\
    
    For $\Diamond\forall z_k\alpha_k\ (k\in[1,m_1])$, by the definition of $R$, $wv_{z_k}$ is $w$'s successor. For any $y\in\delta(wv_{z_k})=Dom(\sigma')$, by IH we have $\mathcal{M},wv_{z_k},\sigma'\vDash \alpha_k[y/z_k]$. As $\sigma'(y)=y$, and $y$ is sub-free with respect to $z_k$ in $\alpha_k$, thus  $\mathcal{M},wv_{y_j},\sigma'(y/z_k)\vDash\alpha_k$. As mentioned before, $\sigma'$ and  $\sigma$ differ in $\{(x_i,x_i)\mid i\in[1,n_1]\}$ and $\{(u_l,u_l)\mid l\in [1,m_2]\}$. By (C), $\bigwedge\Gamma$ is clean, any $x_i$ or $u_l$ doesn't occur freely in $\alpha_k$, thus $\sigma'$ and $\sigma$ are agree with respect to $\alpha_k$. Thus $\sigma'(y/z_k)$ and $\sigma(y/z_k)$ are agree with respect to $\alpha_k$. Therefore, $\mathcal{M},wv_{z_k},\sigma(y/z_k)\vDash \alpha_k$, and then $\mathcal{M},w,\sigma\vDash\Diamond\forall z_k\alpha_k$.\\
      
    For $\Diamond\exists u_r\beta_r\ (r\in[1,m_2])$,by the definition of $R$, $wv_{u_l}$ is $w$'s successor. By IH, $\mathcal{M},wv_{u_r},\sigma'\vDash\beta_l$ and $u_l\in dom(\sigma')$, thus $\mathcal{M},wv_{u_l},\sigma'\vDash\exists u_l\beta_l$, by $\bigwedge\Gamma$ is clean, we have $\mathcal{M},wv_{u_l},\sigma\vDash\exists u_l\beta_l$, and then $\mathcal{M},w,\sigma\vDash\Diamond\exists u_l\beta_l$. 
    
\end{itemize}
On the other hand, we need to show that all the tableau rules preserve satisfiability so that we can construct an open tableau from $(r:\{\theta\},\sigma_r)$. For the $(END)$,  $(\land)$ and $(\lor)$ rules, it's obvious that they preserve satisfiablity.  For the $(BR)$ rule, let 
\begin{multline*}
    \Gamma= \{\Box\exists x_i\phi_i\mid i\in [1,n_1]\}\cup\{\Box\forall y_j\psi_j\mid j\in[1,n_2]\} \ \cup \\ 
    \{\Diamond\forall z_k\alpha_k\mid k\in [1,m_1]\}\cup\{\Diamond\exists u_r\beta_r\mid r\in[1,m_2]\}\cup\{l_1,\cdots,l_s\}
    \end{multline*}
is satisfialbe, we show that 
\begin{itemize}
    \item[A] For any $i\in[1,m_1]$, $\{\phi_1,\cdots,\phi_{n_1},\{\psi_j[x/y_j]\mid j\in[1,n_2],x\in\sigma'\},\{\alpha_i[y/z_i]\mid y\in\sigma'\}\}$ are satisfiable. 
    \item[B] For any $r\in [1,m_2]$, $\{\phi_1,\cdots,\phi_{n_1},\{\psi_j[x/y_j]\mid j\in[1,n_2],x\in\sigma'\},\beta_r \}$ are satisfiable.
\end{itemize}
Let $\mathcal{M}=(W,D,R,\delta,\{V_w\}_{w\in W})$ be an increasing domain model, $w\in W$,  $\eta$ is a valuation that for any free variable $x$ in $\Gamma$: $\eta(x)\in \delta(w)$ and $\mathcal{M},w,\eta\vDash\bigwedge\Gamma$. As $\mathcal{M},w,\eta\vDash\bigwedge\{\Box\exists x_i\phi_i\mid i\in[1,n_1]\}$, by semantics, for $w$'s every successor $v$, there is an $a_i\in\delta(v)$ such that $\mathcal{M},v,\eta(a_i/x_i)\vDash\phi_i$\ $(i\in [1,n_1])$. Since $\phi_i$ is clean, we have $\mathcal{M},v,\eta(\overline{a}/\overline{x})\vDash\phi_1\land\cdots\land\phi_{n_1}$. As $\mathcal{M},w,\eta\vDash\bigwedge\{\Box\forall y_j\psi_j\mid j\in[1,n_2]\}$, for any $w$'s successor $v$, there is a $b \in\delta(v)$ such that $\mathcal{M},v,\eta(b/y_j)\vDash\psi_j$. Note that for all $i\in[1,n_1]$, $y_j\neq x_i$, thus $y_j$ doesn't occur in $\phi_1,\cdots,\phi_{n_1}$, and then  $\mathcal{M},v,\eta(\overline{a}/\overline{x})(b/y_j)\vDash\phi_1\land\cdots\land\phi_{n_1}\land\psi_j$. For any $y\in dom(\sigma')$ and $\eta(y)=b$ (by the range of $\eta$, we know that $b\in \delta(v)$), by (C), we have $y$ is sub-free with respect to $y_j$ in $\psi_j$, thus $\mathcal{M},v,\eta(\overline{a}/\overline{x})\vDash\phi_1\land\cdots\land\phi_{n_1}\land\psi_j[y/y_j]$ for every $y\in dom(\sigma')$. Therefore,  $\mathcal{M},v,\eta(\overline{a}/\overline{x})\vDash\phi_1\land\cdots\land\phi_n\land\bigwedge\{\psi[x/y_j]\mid x\in dom(\sigma')\}$. By the above results, we can show in two parts, for the nodes using by $(BR)$ as $wv_{z_i} (i\in [1,m_1])$ and $wv_{u_r} (r\in [1,m_2])$, the formulas in their labels are all satisfiable.  \\

\par A: By $\mathcal{M},w,\eta\vDash\bigwedge\{\Diamond\forall z_k \alpha_k\mid k\in[1,m_1]\}$, we know that for every $i\in [1,m_1]$, there is a $w$'s successor  $v_i$ such that $\mathcal{M},v_i,\eta\vDash\forall z_k\alpha_k$. For any $y\in dom(\sigma')$, $\eta(y)\in \delta(w)$, and since $\mathcal{M}$ is an increasing domain model, $\delta(w)\subseteq\delta(v_i)$, thus $\eta(y)\in\delta(v_i)$, and then $\mathcal{M},v_i,\eta\vDash\bigwedge\{\alpha_i[y/z_i]\mid y\in dom(\sigma')\}$. As for any $i\in[1,m_1]$, any $x\in dom(\sigma')$ doesn't occur freely in $\alpha_i$, thus $\mathcal{M},v_i,\eta(\overline{a}/\overline{x})\vDash\bigwedge\{\phi_1,\cdots,\phi_{n_1},\{\psi_j[x/y_j]\mid j\in[1,n_2],x\in\sigma'\},\{\alpha_i[y/z_i]\mid y\in\sigma'\}\}$, and then $\{\phi_1,\cdots,\phi_{n_1},\{\psi_j[x/y_j]\mid j\in[1,n_2],x\in\sigma'\},\{\alpha_i[y/z_i]\mid y\in\sigma'\}\}$ is satisfiable. 
\par B: Similarly, by $\mathcal{M},w,\eta\vDash\bigwedge\{\Diamond\exists u_l\beta_l\mid l\in[1,m_2]\}$, we have for any $l\in[1,m_2]$, there is a $w$'s successor $v_l$ and an $d\in \delta(v_l)$ such that $\mathcal{M},v_l,\eta(d/u_l)\vDash\beta_l$. Since $u_l$ doesn't occur freely in $\phi_1\land\cdots\land\phi_{n_1}\land\bigwedge\{\psi[x/y_j]\mid x\in dom(\sigma')\}$, $\mathcal{M},v_l,\eta(\overline{a}/\overline{x})(d/u_l)\vDash\bigwedge\{ \phi_1,\cdots,\phi_{n_1},\{\psi_j[x/y_j]\mid j\in[1,n_2],x\in\sigma'\},\beta_l  \}$, that is to say $\{ \phi_1,\cdots,\phi_{n_1},\{\psi_j[x/y_j]\mid j\in[1,n_2],x\in\sigma'\},\beta_l\}$ is satisfiable. 
\end{proof}

\begin{corollary}
The satisfiability problem of $\mathcal{L}_{\Box\exists\Box\forall}$ fragment over increasing domain models is decidable. 
\end{corollary}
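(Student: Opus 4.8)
The plan is to reduce satisfiability of an arbitrary $\mathcal{L}_{\Box\exists}$ formula (and likewise an $\mathcal{L}_{\Box\forall}$ formula, hence any $\mathcal{L}_{\Box\exists\Box\forall}$ formula) to the existence of an open tableau as characterised by Theorem 3, and then to argue that this existence question is itself decidable because the relevant tableau search space is finite. First I would observe that every formula of these fragments can be effectively rewritten into an equivalent clean PNF formula $\theta$ of $\mathcal{L}_{\Box\exists\Box\forall}$: driving negations inward to reach PNF is routine, and renaming bound variables so that no variable is both free and bound and every quantifier binds a fresh variable is a purely syntactic terminating procedure that preserves truth in every increasing domain model. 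By Theorem 3, $\theta$ — and therefore the original formula — is satisfiable in an increasing domain model if and only if there is an open tableau with root $(r:\{\theta\},\sigma_r)$ (with $\mathrm{dom}(\sigma_r)$ as prescribed there).

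It remains to show that ``there exists an open tableau rooted at $(r:\{\theta\},\sigma_r)$'' is a decidable property of $\theta$, which I would do via a finiteness argument in three steps. (1) \emph{Only finitely many variables occur.} Every variable added to $\sigma'$ by an application of $(BR)$ is one of the $x_i$ or $u_j$ bound by a $\Box\exists$ or $\Diamond\exists$ prefix already present in the premise label, hence a bound variable of $\theta$; combined with the free variables of $\theta$ and the auxiliary variable $z$, all labels use variables from one fixed finite set $X_\theta$. The cleanness invariant \textbf{(C)} established in the proof of Theorem 3 is exactly what keeps this bookkeeping inside $X_\theta$. (2) \emph{Every branch is finite.} Assign to each label the modal depth of its formula set (maximal nesting of bundled operators). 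The $(\land)$ and $(\lor)$ rules strictly decrease the total number of Boolean connectives in the label without increasing its modal depth; the $(BR)$ rule replaces every formula of the label by proper subformulas of the $\phi_i,\psi_j,\alpha_k,\beta_r$ (the substitutions $\psi_j^*[x/y_j]$ and $\alpha_i[y/z_i]$ only rename variables, introducing no new operators), so it strictly decreases modal depth; and $(END)$ produces a leaf. Hence along any branch the modal depth is non-increasing and strictly drops at each $(BR)$ application, so at most $\mathrm{md}(\theta)$ many $(BR)$ steps occur, and between consecutive $(BR)$ steps only finitely many $(\land)/(\lor)$ steps are possible; thus branch length is bounded in terms of $\theta$. (3) \emph{The tableau is finitely branching:} $(\lor)$ has two children and every other rule has a single child or a finite (size $m_1+m_2$) family of children. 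By König's lemma every tableau rooted at $(r:\{\theta\},\sigma_r)$ is finite; and since all labels only mention literals over the finitely many predicates occurring in $\theta$ and variables from $X_\theta$, there are only finitely many possible labels and hence only finitely many such tableaux.

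Finally I would conclude the argument: enumerate these finitely many tableaux (equivalently, run the tableau construction systematically in a fair order, which terminates by the bound above), and for each check whether it is open, which is a decidable property of a finite labelled tree. By Theorem 3 the original formula is satisfiable iff at least one of them is open, so this is a decision procedure, establishing the corollary. The main obstacle is step (2): one has to fix the complexity measure on labels carefully enough that the variable-renaming substitutions performed by $(BR)$ are visibly seen not to raise modal depth or re-introduce quantifiers, and that the finitely-many $(\land)/(\lor)$ steps interleaved between $(BR)$ applications cannot cause divergence; the rest is routine.
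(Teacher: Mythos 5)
Your proposal is correct and follows essentially the same route as the paper: the corollary is meant to be an immediate consequence of the tableau characterisation theorem for $\mathcal{L}_{\Box\exists\Box\forall}$, and your finiteness/termination argument (modal depth strictly drops at each $(BR)$ application, Boolean rules terminate, branching is finite, variables stay within a fixed finite set by the cleanness invariant) is exactly the fleshing-out the paper leaves implicit. No gaps.
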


For satisfiability of all bundled-fragment over increasing domain models, since we can add variables of which the quantifiers are eliminated by $(BR)$ rule into domains of successors, the tableau method works very well. But how about constant domain models? In this case, we have to set a universal domain at the start of the tableau construction and use only these elements as witness. 

We have a decidable result for $\mathcal{L}_{\exists\Box}$ fragment over constant domain models. We do this by calculating precise bound on how many new elements need to be added for each sub-formula of the form $\exists x\Box\varphi$ and include as many as needed at the beginning of the tableau construction.

\begin{theorem}
The satisfiability problem for $\mathcal{L}_{\exists\Box}$-formulas over constant domain models is decidable. 
\end{theorem}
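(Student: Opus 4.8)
The plan is to carry out a tableau construction in the spirit of the method of \cite{Padmanabha2018} for the $\exists\Box$-fragment (and the model-reading technique in the proof of Theorem~2), but over a \emph{fixed} finite domain $D_0$ that is chosen once and for all before the tree is built, so that every node of the tableau works over the constant domain $D_0$ and every witness demanded by a subformula of the form $\exists x\Box\varphi$ is forced to be an element of $D_0$. First I would put $\theta$ into clean PNF, so that it is built from literals by $\wedge$, $\vee$, $\exists x\Box$ and $\forall x\Diamond$, and let $E$ be the finite set of its subformulas of the form $\exists x\Box\varphi$. The domain will be $D_0 = F\cup\bigcup_{\exists x\Box\varphi\in E}C_\varphi$, where $F$ has one element for each free variable of $\theta$ plus one dummy element as in Theorem~2, and each $C_\varphi$ is a set of $N_\varphi$ fresh ``reserve'' elements, with $N_\varphi$ a computable bound pinned down in the next step. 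With $D_0$ frozen I would replace the modal step rule by a variant in which, when the obligations $\exists x_1\Box\varphi_1,\dots,\exists x_n\Box\varphi_n$ are discharged at a node, each $x_i$ is not adjoined to the domain but is \emph{substituted} by some element of $D_0$, the rule branching over all such choices, while the $\forall x\Diamond$-obligations are discharged as in Theorem~2 but relativised to $D_0$, and an $(END)$-style rule still closes a node carrying only $\exists\Box$-formulas and literals. Because $D_0$ is finite, every node label is a finite set of substitution instances of subformulas of $\theta$ by elements of $D_0$ together with a partial map into $D_0$, so only finitely many labels can occur; adjoining a standard loop-check --- do not expand a node whose label already appears on the path back to the root --- then makes every tableau finite, and the existence of an open tableau becomes decidable by exhaustive search.

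The content of the proof is (i)~to fix the bounds $N_\varphi$ and (ii)~to establish the equivalence ``$\theta$ is satisfiable over a constant-domain model $\iff$ there is an open $D_0$-tableau with root $(r:\{\theta\},\sigma_r)$''. Soundness (open tableau $\Rightarrow$ model) is routine: one reads the tableau as a constant-domain model with domain $D_0$ exactly as in the proof of Theorem~2 and checks, by induction from the leaves and across loop-check edges --- sound precisely because $\exists x\Box$ and $\forall x\Diamond$ impose only one-step constraints that are already met at the target node --- that every label holds at its node. For completeness (model $\Rightarrow$ open tableau) I would start from $\mathcal{M},w_0,\sigma_0\vDash\theta$, pass to a ``small'' model of $\theta$ by a filtration/selection argument in the style of \cite{Padmanabha2018} that identifies worlds satisfying the same subformulas of $\theta$, and observe that in such a model only boundedly many distinct domain elements are ever used as a witness or as a value of a free variable: the number of non-equivalent worlds is at most some $M=M(|\theta|)$ and each contributes at most $|E|$ witnesses, so one may take each $N_\varphi$ to be $M$ (or, more crudely, the number of Hintikka-style atoms over the subformulas of $\theta$). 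Deleting from the small model every domain element that is neither the value of a free variable nor a chosen witness --- an operation that preserves $\vDash\theta$, since removing elements only weakens the $\forall x\Diamond$-clauses and leaves every chosen $\exists x\Box$-witness in place --- yields a constant-domain model over a set of size at most $|D_0|$, and the witness choices made there drive the rewritten step rule directly, producing an open $D_0$-tableau.

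The step I expect to be the genuine obstacle is (i): pinning down a sharp $N_\varphi$, and in particular ruling out the naive fear that a witness chosen for $\exists x\Box\varphi$ near the root must be kept distinct from every later witness --- after all, the obligation $\varphi[d/x]$ is pushed to \emph{every} successor and $\varphi$ may itself regenerate further $\exists\Box$-obligations, so along a single branch one seems forced to invent ever new witnesses. The filtration above is exactly what defuses this: after merging equivalent worlds an infinite branch collapses into a lasso of length at most $M$, so only boundedly many witnesses are ever live, and any two witnesses introduced at \emph{incomparable} nodes may be mapped to a common reserve element, their scopes being disjoint sets of worlds in the unraveling so that no literal gets over-constrained. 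A conceptually cleaner, if less explicit, route that makes the existence of such a bound transparent is to note that once $D_0$ is frozen $\exists x\Box\varphi$ is equivalent to $\bigvee_{d\in D_0}\Box\varphi[d/x]$ and $\forall x\Diamond\varphi$ to $\bigwedge_{d\in D_0}\Diamond\varphi[d/x]$, so $\theta$ becomes a formula of basic modal logic $\mathbf{K}$ over the finitely many atoms $P\overline{d}$ with entries from $D_0$; decidability of $\mathbf{K}$ then reduces the whole problem to trying each domain size $k\le N$ together with each map of the free variables into a $k$-element set, and only the precise value of $N$ --- the bound advertised just before the theorem --- still has to be squeezed out of the counting argument.
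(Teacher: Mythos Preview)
The paper does not supply its own proof of this theorem; it cites \cite{Padmanabha2018} and, in the sentence preceding the statement, summarises the method as precomputing a bound on how many fresh witnesses are needed for each subformula $\exists x\Box\varphi$ and fixing the domain accordingly before running the tableau. Your proposal follows exactly this outline, so it agrees with the paper's (referenced) approach.

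One remark on the part you flag as the obstacle: the filtration detour you sketch for extracting $N_\varphi$ is circular as written, since the number of world-types you are quotienting by depends on the instantiated subformulas and hence on the very domain size you are trying to bound. The direct observation you already make is the real argument: witnesses introduced at \emph{incomparable} tableau nodes may be identified, and along a single branch the modal rule fires at most $\mathrm{md}(\theta)$ times, each time consuming at most one witness per $\exists\Box$-subformula in the current label; this gives a bound of the order $|F| + |E|\cdot\mathrm{md}(\theta)$ without any appeal to filtration. With $D_0$ so fixed the modal rule strictly decreases depth, so your loop-check is superfluous and the tableau is already finite; soundness and completeness then go through by the same leaf-to-root induction as in Theorem~2, and the translation into propositional $\mathbf{K}$ that you mention at the end becomes a genuine alternative route rather than merely a heuristic.
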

\begin{proof}
See \cite{Padmanabha2018}
\end{proof}

\section{Undecidability Results}
We prove that the satisfiability problem for the $\mathcal{L}_{\forall\Box}$ fragment over the class of constant domain models is undecidable even when the atomic predicates are restricted to be unary. \\

Consider $\mathcal{L}_{FO(R)}$, the fist order logic with only variables as terms and no equality, and the single binary relation $R$. We know $\mathcal{L}_{FO(R)}$ is undecidable, and we reduce $\mathcal{_L}_{FO(R)}$ into $\mathcal{L}_{\Box\forall}$. \\

For any quantifier-free $\mathcal{L}_{FO(R)}$ formula, we define the translation $Tr$ as 
\begin{itemize}
    \item $Tr(Rxy):=\exists z\Diamond(Px\land Qy)$, where $z$ is distinct from $x$ and $y$
    \item $Tr(\neg\beta):=\neg Tr(\beta)$
    \item $Tr(\beta_1\land\beta_2):= Tr(\beta_1)\land Tr(\beta_2)$.
\end{itemize}
Now consider $\mathcal{L}_{FO(R)}$ formulas in prenex normal forms. Let $\alpha$ be a sentence as $Q_1x_1Q_2x_2\cdots Q_nx_n \beta$ where $\beta$ is quantifier-free. We define: 
$$\psi_\alpha:= Q_1x_1\Delta_1 Q_2x_2\Delta_2\cdots Q_nx_n\Delta_n Tr(\beta)$$
where $Q_ix_i\Delta_i:=\exists_i x_i\Diamond$ when $Q_i=\exists$ and $Q_ix_i\Delta_i=\forall_i x_i \Box$ when $Q_i=\forall$.  \\ 

In addition, we need to make the model has a certain depth correponding to $\alpha$'s quantifier so that there is no dead end state before the modal depth of its translation. We use formula $\lambda_n:= \bigwedge^n_{j=0}(\forall z\Box)^j(\exists z\Diamond \top)$ to ensure for all state at depth $i\leq n$ it has a successor.  \\

Finally, we need to ensure that $\exists z\Diamond(Px\land Qy)$ is evaluated uniformly at the ``tail'' states. We have $$\gamma_n:=\forall z_1\Box\forall z_2\Box((\exists z\Diamond)^n(\exists z\Diamond(Pz_1\land Qz_2))\rightarrow (\forall z\Box)^n(\exists z\Diamond(Pz_1\land Qz_2)))$$
where $z,z_1,z_2$ doesn't appear in $\alpha$. 

\begin{definition}
Given a $\mathcal{L}_{FO(R)}$ sentence $\alpha:= Q_1x_1Q_2x_2\cdots Q_nx_n\beta$ in prenex normal form, the translation $\mathcal{L}_{\forall\Box}$ formula $\varphi_\alpha$ is given by: $$\varphi_\alpha:=(\forall z\Box)^2 \psi_\alpha\land \lambda_{n+2}\land\gamma_n$$
where $z$ doesn't appear $\alpha$
\end{definition}

Note that for any $\mathcal{L}_{FO(R)}$ sentence $\alpha$ of quantifier depth $n$, we get a translated formula $\varphi_\alpha$ of modal depth $n+3$. 

\begin{theorem}
For any $\mathcal{L}_{FO(R)}$ sentence $\alpha:=Q_1x_1Q_2x_2\cdots Q_nx_n\beta$ in prenex normal form, $\alpha$ is satisfiable iff $\varphi_\alpha$ is satisfiable on constant domain model. 
\end{theorem}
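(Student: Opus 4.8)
The plan is to prove the two implications separately, in each case by an explicit model construction accompanied by an induction whose $k$-th stage matches the first $k$ quantifiers of $\alpha$ against the first $k$ quantifier-modalities $Q_ix_i\Delta_i$ of $\psi_\alpha$. Throughout one uses that $\mathcal{L}_{\forall\Box}$ contains, besides $\forall x\Box$, its dual $\exists x\Diamond$, so every subformula occurring in $\varphi_\alpha$ — including those hidden inside the $\to$ of $\gamma_n$ — is a legitimate $\mathcal{L}_{\forall\Box}$ formula; and all domains in play are non-empty.

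For the direction ``$\alpha$ satisfiable $\Rightarrow$ $\varphi_\alpha$ satisfiable'', I would take $\mathcal{A}=(A,R^{\mathcal{A}})$ with $\mathcal{A}\vDash\alpha$ and build a constant domain model $\mathcal{M}$ with $D:=A$ whose frame is a tree: a root $r$, a padding chain $r\,R\,t_1\,R\,t_2$, then below $t_2$ the full $A$-branching tree of height $n$ (a node at depth $2+k$ with $k<n$ has one child $u_a$ for each $a\in A$, recording $x_{k+1}\mapsto a$, \emph{irrespective} of whether $Q_{k+1}$ is $\forall$ or $\exists$), and finally below each depth-$(n+2)$ leaf $u$ one successor carrying the valuation $V(P)=\{c\}$, $V(Q)=\{d\}$ for every $(c,d)\in R^{\mathcal{A}}$, plus one ``dummy'' successor with $V(P)=V(Q)=\emptyset$. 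Then $\mathcal{M},r\vDash\varphi_\alpha$ is checked in three parts. (i) $\lambda_{n+2}$ holds since every world at depth $\le n+2$ was given a successor. (ii) $\gamma_n$ holds since at each depth-$(n+2)$ leaf $u$ and all $a,b\in D$ one has $\mathcal{M},u,[z_1\mapsto a,z_2\mapsto b]\vDash\exists z\Diamond(Pz_1\land Qz_2)$ iff $(a,b)\in R^{\mathcal{A}}$, \emph{uniformly} in $u$, so at any depth-$2$ world both sides of the $\to$ collapse to ``$(a,b)\in R^{\mathcal{A}}$''. (iii) For $(\forall z\Box)^2\psi_\alpha$, since $t_2$ is the unique depth-$2$ world it suffices to show $\mathcal{M},t_2\vDash\psi_\alpha$; I would prove by induction from the leaves upward that for every node $u$ at depth $2+k$ and every assignment $\tau$,
\[\mathcal{M},u,\tau\vDash Q_{k+1}x_{k+1}\Delta_{k+1}\cdots Q_nx_n\Delta_n\,Tr(\beta)\iff\mathcal{A},\tau\vDash Q_{k+1}x_{k+1}\cdots Q_nx_n\,\beta .\]
The base case $k=n$ uses the uniform tail structure to get $\mathcal{M},u,\tau\vDash\exists z\Diamond(Px_i\land Qx_j)$ iff $(\tau(x_i),\tau(x_j))\in R^{\mathcal{A}}$, plus the fact that $Tr$ commutes with $\neg$ and $\land$. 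The inductive step is immediate because the subtree below $u$ does not depend on $\tau$: for $\exists$ any child serves as the $\Diamond$-witness, for $\forall$ the hypothesis at the children holds for all values and all children. Instantiating $k=0$ and using $\mathcal{A}\vDash\alpha$ gives $\mathcal{M},t_2\vDash\psi_\alpha$.

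For the direction ``$\varphi_\alpha$ satisfiable $\Rightarrow$ $\alpha$ satisfiable'', let $\mathcal{M}=(W,R,D,\{V_w\}_{w\in W})$ be a constant domain model with $\mathcal{M},w_0\vDash\varphi_\alpha$. By $\lambda_{n+2}$ choose $w_0\,R\,w_1\,R\,w_2$ and put $s:=w_2$; then $\mathcal{M},s\vDash\psi_\alpha$. I would define $\mathcal{A}:=(D,R^{\mathcal{A}})$ by declaring $(a,b)\in R^{\mathcal{A}}$ iff $\mathcal{M},s,[z_1\mapsto a,z_2\mapsto b]\vDash(\exists z\Diamond)^n(\exists z\Diamond(Pz_1\land Qz_2))$. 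Reading $\gamma_n$ at $w_0$ (instantiating $z_1,z_2$ to $a,b$ and the depth-$2$ world to $s$) yields the crucial fact that $(a,b)\in R^{\mathcal{A}}$ iff \emph{every} world $t$ reachable from $s$ in exactly $n$ steps satisfies $\mathcal{M},t,[z_1\mapsto a,z_2\mapsto b]\vDash\exists z\Diamond(Pz_1\land Qz_2)$ (equivalently, iff some such $t$ does; at least one exists by $\lambda_{n+2}$). Then I would show by downward induction on $k$ from $n$ to $0$ that for every world $t$ reachable from $s$ in $k$ steps and every assignment $\sigma$ to $x_1,\dots,x_k$, $\mathcal{M},t,\sigma\vDash Q_{k+1}x_{k+1}\Delta_{k+1}\cdots Q_nx_n\Delta_n\,Tr(\beta)$ implies $\mathcal{A},\sigma\vDash Q_{k+1}x_{k+1}\cdots Q_nx_n\,\beta$. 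The base case $k=n$ combines the displayed characterization of $R^{\mathcal{A}}$ with the commutation of $Tr$. In the inductive step, if $Q_{k+1}=\exists$ the element and successor furnished by $\mathcal{M},t,\sigma\vDash\exists x_{k+1}\Diamond(\cdots)$ feed straight into the induction hypothesis; if $Q_{k+1}=\forall$, then for an arbitrary $a\in D$ pick any successor of $t$ — one exists because $t$ lies at depth $\le n+1$ and $\lambda_{n+2}$ forbids dead ends up to depth $n+2$ — apply the hypothesis there, and since $a$ was arbitrary conclude the universal statement. The instance $k=0$, $t=s$ then gives $\mathcal{A}\vDash\alpha$.

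The two inductions and the ``$Tr$ commutes with the connectives'' step are mechanical; the part that actually needs care is the interplay of $\gamma_n$, the two-box padding $(\forall z\Box)^2$, and $\lambda_{n+2}$. The padding forces $\psi_\alpha$ to be evaluated at depth $2$, which is exactly where $\gamma_n$'s uniformity statement lives (its prefix $\forall z_1\Box\forall z_2\Box$ has two boxes), so that the ``$R$-behaviour'' seen at the tails is the same across all branches explored while unfolding $\psi_\alpha$; and $\lambda_{n+2}$ guarantees that no branch of the construction dies before depth $n+2$, which is precisely what makes the $\forall$-steps of the second induction have a successor to move to and lets the atoms $\exists z\Diamond(Pz_1\land Qz_2)$ actually be witnessed. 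Finally, constant domain is essential: it is what makes $z_1,z_2$ in $\gamma_n$ range over all of $D$ at every world, so that $R^{\mathcal{A}}$ is a relation on the \emph{full} domain — over increasing domain models this breaks down, consistently with $\mathcal{L}_{\forall\Box}$ being decidable there.
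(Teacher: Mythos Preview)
Your argument is correct. The paper does not actually prove this theorem: its proof reads in full ``See \cite{Padmanabha2018}''. So there is no in-paper argument to compare yours against; you have supplied precisely the proof the paper omits.

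Your two model constructions and the matching inductions on the quantifier index $k$ are the standard route for this reduction, and your analysis of how the three conjuncts of $\varphi_\alpha$ cooperate is exactly the point of the translation: the padding $(\forall z\Box)^2$ pushes the evaluation of $\psi_\alpha$ to depth~$2$, which is where the body of $\gamma_n$ lives; $\gamma_n$ then makes the truth of $\exists z\Diamond(Pz_1\land Qz_2)$ at depth $n{+}2$ independent of the chosen branch; and $\lambda_{n+2}$ keeps every branch alive long enough for the $\forall x\Box$ steps in the backward induction to have a successor to move to and for the tail atom to be witnessed. One cosmetic remark on the forward construction: you could branch only on the $\forall$-quantified variables and keep a single witness branch under each $\exists$; your choice of full $A$-branching everywhere is harmless and, as you note, makes the inductive invariant independent of the path, which is tidier.
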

\begin{proof}
See \cite{Padmanabha2018}
\end{proof}
\begin{corollary}
The satisfiability problem for the $\mathcal{L}_{\Box\forall}$ fragment over constant domain models is undecidable. 
\end{corollary}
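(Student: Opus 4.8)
The plan is to present the Corollary as an immediate consequence of the preceding Theorem together with the classical undecidability of $\mathcal{L}_{FO(R)}$, by packaging the construction $\alpha \mapsto \varphi_\alpha$ into an effective many-one reduction. First I would recall that every $\mathcal{L}_{FO(R)}$ sentence can be converted, by a standard and effective procedure, into a logically equivalent prenex sentence $\alpha = Q_1x_1\cdots Q_nx_n\,\beta$ with $\beta$ quantifier-free, and that this transformation preserves satisfiability; hence it is enough to decide satisfiability for sentences already in prenex normal form. Next I would check that the map $\alpha \mapsto \varphi_\alpha = (\forall z\Box)^2\psi_\alpha \land \lambda_{n+2} \land \gamma_n$ is computable: each of $\psi_\alpha$, $\lambda_{n+2}$ and $\gamma_n$ is obtained from $\alpha$ by a purely syntactic recursion (the recursion defining $Tr$, and the fixed schemata for $\lambda$ and $\gamma$), producing a formula of size polynomial in $|\alpha|$, so $\varphi_\alpha$ is genuinely output by an algorithm.

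Then I would verify that $\varphi_\alpha$ really does belong to (an equivalent rewriting within) the $\mathcal{L}_{\Box\forall}$ fragment, and moreover uses only unary predicates. The only predicates occurring in $\varphi_\alpha$ are the unary symbols $P$ and $Q$ introduced by $Tr$, together with the constant $\top$, which is definable. The apparent uses of $\Diamond$ and of $\exists$ — in $Tr(Rxy) = \exists z\Diamond(Px \land Qy)$, in $\lambda_n$, and in $\gamma_n$ — are only the De Morgan dual of $\forall z\Box$: since $\mathcal{L}_{\Box\forall}$ is closed under negation and conjunction, $\exists z\Diamond\chi$ abbreviates $\neg\,\forall z\Box\,\neg\chi$, so every such occurrence can be eliminated in favour of $\Box\forall$, and the positive normal form $\mathcal{M}_{\Box\exists\Box\forall}$-PNF used above is merely a notational device. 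Thus, up to this equivalence, $\varphi_\alpha \in \mathcal{L}_{\Box\forall}$ over unary predicates.

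Finally, combining these observations with the preceding Theorem — $\alpha$ is satisfiable iff $\varphi_\alpha$ is satisfiable on a constant domain model — yields a many-one reduction from the satisfiability problem of $\mathcal{L}_{FO(R)}$ to the satisfiability problem of $\mathcal{L}_{\Box\forall}$ over constant domain models. Were the latter decidable, composing a decision procedure for it with the (effective) prenexing step and the (effective) map $\alpha \mapsto \varphi_\alpha$ would decide $\mathcal{L}_{FO(R)}$-satisfiability, contradicting its undecidability. Hence $\mathcal{L}_{\Box\forall}$-satisfiability over constant domain models is undecidable; and since the construction only ever uses the unary predicates $P$ and $Q$, the same argument establishes the sharper claim that undecidability persists when the vocabulary is restricted to unary predicates.

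The heavy lifting — that $\varphi_\alpha$ forces a tree model whose ``tail'' states encode an $R$-structure, so that satisfiability is transferred faithfully — is already discharged by the cited Theorem. The only point in the Corollary itself that needs genuine care, and the one I would dwell on, is the bookkeeping of the previous paragraph: confirming that none of the ingredient steps (prenexing $\alpha$, the recursion defining $Tr$, and the dualization turning $\Diamond$/$\exists$ into $\Box$/$\forall$) secretly escapes $\mathcal{L}_{\Box\forall}$ or introduces a non-unary predicate. Everything else is a routine appeal to closure of the class of decidable problems under effective reductions.
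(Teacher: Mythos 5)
Your overall strategy is the intended one: the corollary is meant to follow from the preceding theorem by packaging $\alpha\mapsto\varphi_\alpha$ (preceded by effective prenexing) as a computable many-one reduction from $\mathcal{L}_{FO(R)}$-satisfiability, and your points about effectiveness, the unary vocabulary $\{P,Q\}$, and closure of decidability under such reductions are all fine. The paper itself offers no further argument, so this is essentially the right shape of proof.

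There is, however, a genuine gap at exactly the step you single out as the delicate one, namely fragment membership. The formula $\varphi_\alpha$ is built from quantifier-before-modality bundles: $Tr(Rxy)=\exists z\Diamond(Px\land Qy)$, $\lambda_n$ uses $(\forall z\Box)^j(\exists z\Diamond\top)$, and $\gamma_n$ uses $\forall z_1\Box$, $(\exists z\Diamond)^n$, $(\forall z\Box)^n$. Rewriting $\exists z\Diamond\chi$ as $\neg\forall z\Box\neg\chi$, as you propose, still leaves you with a $\forall z\Box$ bundle, i.e.\ a formula of $\mathcal{L}_{\forall\Box}$, \emph{not} of $\mathcal{L}_{\Box\forall}$, whose primitive is $\Box\forall x\varphi$ with dual $\Diamond\exists x\varphi$. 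These are distinct fragments in this paper (they are defined separately and have different decidability status over increasing domains), so you cannot treat ``$\forall z\Box$'' and ``$\Box\forall$'' as interchangeable notation. To land the corollary as stated you need the additional observation that over \emph{constant} domain models the two bundles coincide semantically: since $\delta(v)=\delta(w)=D$ for all worlds, $\forall x\Box\varphi\equiv\Box\forall x\varphi$ and $\exists x\Diamond\varphi\equiv\Diamond\exists x\varphi$, so $\varphi_\alpha$ can be re-bundled into an equivalent $\mathcal{L}_{\Box\forall}$ formula of the same size. Without that step your argument only establishes undecidability for $\mathcal{L}_{\forall\Box}$ over constant domains (which is the fragment the theorem's construction literally lives in), not the $\mathcal{L}_{\Box\forall}$ claim of the corollary.
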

 But we cannot prove $\mathcal{L}_{\Box\exists}$ is undeciable by this method. The reason is that for  
$$\gamma_n:=\forall z_1\Box\forall z_2\Box((\exists z\Diamond)^n(\exists z\Diamond(Pz_1\land Qz_2))\rightarrow (\forall z\Box)^n(\exists z\Diamond(Pz_1\land Qz_2)))$$
The version of $\mathcal{L}_{\Box\exists}$ is
$$\gamma_n':=\Box\forall z_1\Box\forall z_2 ((\Diamond\forall z)^n(\Diamond \forall z( Pz_1\land Qz_2))\rightarrow (\Box\exists z)^n(\Diamond\forall z(Pz_1\land Qz_2)$$
$\gamma_n'$ cannot ensure that $\Diamond \forall z(Px\land Qy)$ is evaluated uniformly at the ``tail'' states. But when we consider a $\mathcal{L}_{\Box\exists^2}$ fragment which bind one modality and two quantifiers together, we can have a formula 
$$\gamma_n^*:=\Box\forall z_1\forall z_2 ((\Diamond\forall z)^n(\Diamond \forall z( Pz_1\land Qz_2))\rightarrow (\Box\exists z)^n(\Diamond\forall z(Pz_1\land Qz_2)$$
The other parts of translation is similar:
\begin{itemize}
    \item $Tr(\beta):=\Diamond\forall z\forall z'(Px\land Qy)$ where $z,z'$ is different from $x,y$
    \item $Tr(\neg\beta):=\neg Tr(\beta)$
    \item $Tr(\beta_1\land\beta_2):= Tr(\beta_1)\land Tr(\beta_2)$
    \item $\psi^*_\alpha:= \Delta_1 Q_1 x_1 Q_1 x_1' \cdots \Delta_nQ_nx_nQ_nx_n' Tr(\beta)$. 
    \item $\lambda^*_n:= \bigwedge^n_{j=0}(\Box\exists z\exists z')^j\Diamond\forall z\forall z'\top $
\end{itemize}
\begin{definition}
Given a $\mathcal{L}_{FO(R)}$ sentence $\alpha:= Q_1x_1Q_2x_2\cdots Q_nx_n\beta$ in prenex normal form, the translation $\mathcal{L}_{\Box\exists^2}$ formula $\varphi_\alpha$ is given by: $$\varphi_\alpha:=\Box\exists z\exists z' \psi^*_\alpha\land \lambda^*_{n+1}\land\gamma_n^*$$
where $z,z'$ doesn't appear in $\alpha$
\end{definition}

\begin{theorem}
For any $\mathcal{L}_{FO(R)}$ sentence $\alpha:=Q_1x_1Q_2x_2\cdots Q_nx_n\beta$ in prenex normal form, $\alpha$ is satisfiable iff $\varphi_\alpha^*$ is satisfiable on constant domain model. 
\end{theorem}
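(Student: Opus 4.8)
The plan is to follow the $\mathcal{L}_{\Box\forall}$ reduction recorded above (the correctness of $\varphi_\alpha$, \cite{Padmanabha2018}), the only new ingredient being that each modality now carries two bound variables rather than one; one checks this breaks neither direction, and that the second binder is exactly what lets $\gamma_n^*$ do the job $\gamma_n'$ could not. (It is also worth noting that $\mathcal{L}_{\Box\forall}$ embeds into $\mathcal{L}_{\Box\exists^2}$ by padding the second binder of each operator vacuously, which is an equivalence because local domains are non-empty, so the statement already follows from the previous corollary; the direct reduction is given because it isolates the role of the extra binder.) Write the first-order side as $\mathfrak{A}=(A,R^{\mathfrak{A}})$ and recall that a constant domain model shares one individual domain $D$ across all worlds.

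\textbf{$\alpha$ satisfiable $\Rightarrow$ $\varphi_\alpha^*$ satisfiable.} Given $\mathfrak{A}\models\alpha$, I build a constant domain model $\mathcal{M}$ with domain $A$, shaped as a finite tree: a top padding layer realising the prefix $\Box\exists z\exists z'$, then $n$ layers where the $i$-th realises the block $\Delta_iQ_ix_iQ_ix_i'$, and at the bottom a block of ``tail'' worlds containing a world $t_{a,b}$ for every pair $(a,b)\in A^2$, with $a\in V(P)$ and $b\in V(Q)$. A depth-$n$ world lying on the branch that has chosen $a_1,\dots,a_n$ for $x_1,\dots,x_n$ is declared to see $t_{a,b}$ precisely when $R^{\mathfrak{A}}ab$; all branches are glued to one and the same block of tail worlds, so the truth of each $\Diamond\forall z\forall z'(Pa\land Qb)$ is literally the same at every tail world, which is the uniformity $\gamma_n^*$ requires. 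The padding layer and the uniform branching make $\lambda_{n+1}^*$ true and leave no dead end before $Tr(\beta)$ is evaluated. An induction on the quantifier prefix then gives $\mathcal{M},\mathrm{root}\models\psi_\alpha^*$: at a block with $Q_i=\exists$ one descends along the branch carrying the first-order witness, at a block with $Q_i=\forall$ one uses that every element of $D=A$ is available at the successor, and the spare variable $x_i'$ is never free in the rest of the formula and costs nothing. Hence $\mathcal{M}\models\varphi_\alpha^*$.

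\textbf{$\varphi_\alpha^*$ satisfiable $\Rightarrow$ $\alpha$ satisfiable.} Conversely, let $\mathcal{M},w_0\models\varphi_\alpha^*$ over a constant domain $D$. From $\mathcal{M},w_0\models\lambda_{n+1}^*$ one gets that every path to be used has successors down to the modal depth at which $Tr(\beta)$ sits, so each $\Diamond\forall z\forall z'(\cdots)$ coming from $Tr$ is evaluated at worlds that exist. From $\mathcal{M},w_0\models\gamma_n^*$ one gets the \emph{path-independence lemma}: for all $a,b\in D$ the truth value of $\Diamond\forall z\forall z'(Pa\land Qb)$ is the same at all tail worlds, independently of the branch reaching them — and this is exactly where two bound variables are needed, since $\gamma_n^*$ must fix $z_1$ and $z_2$ simultaneously before descending to the tail, whereas the one-variable $\gamma_n'$ would force an uncontrolled modal step in between. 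Using the lemma, define $R^{\mathfrak{A}}$ on $D$ by: $R^{\mathfrak{A}}ab$ iff $\Diamond\forall z\forall z'(Pa\land Qb)$ holds at the tail worlds (well defined). An induction up the prefix of $\psi_\alpha^*$ then yields $\mathfrak{A}\models\alpha$: a block with $Q_i=\exists$ returns, in some successor, an element of $D$ serving as the existential witness; a block with $Q_i=\forall$ returns a successor at which every element of $D$ works; the inert $x_i'$ is carried along; and at the bottom $Tr(\beta)$ read back through the definition of $R^{\mathfrak{A}}$ is precisely $\beta$.

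\textbf{Where the difficulty lies.} The delicate part is the backward direction and, inside it, the path-independence lemma. Modal successors are not unique, so a $Q_i=\exists$ block asserts something about \emph{all} successors while the first-order $\exists x_i$ it simulates asserts something about \emph{one} element; the back-translation survives only because the domain is constant (a witness found in one successor is globally available) and because $\gamma_n^*$ forces the reading of $R^{\mathfrak{A}}$ at the tails to be branch-independent. I would therefore spend most of the write-up (a) stating and proving that lemma from $\gamma_n^*$ and saying explicitly why $\gamma_n'$ fails, and (b) lining up the depth supplied by $\lambda_{n+1}^*$ with the modal depth of $Tr(\beta)$ so that no tail subformula is ever evaluated at a spurious dead end; the remainder is the same bookkeeping as in the $\mathcal{L}_{\Box\forall}$ reduction.
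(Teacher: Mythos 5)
The paper does not actually prove this theorem: the statement is the last item before the bibliography and no proof (not even a citation) follows it, so there is nothing to compare your argument against except the analogous $\mathcal{L}_{\forall\Box}$ reduction that the paper defers to the literature. Your sketch is the intended argument and, in my judgement, a correct one in outline: the forward model construction with a shared block of tail worlds, the backward direction via a path-independence lemma extracted from $\gamma_n^*$, and the observation that constant domains are what let a witness found in one successor of a $\Box\exists$ block serve globally, are exactly the points on which the reduction turns. You also correctly locate why two binders are needed: $\gamma_n^*$ fixes $z_1,z_2$ in a single bundle before any further modal step, which the one-variable $\gamma_n'$ cannot do.

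Two cautions. First, the parenthetical claim that the theorem ``already follows from the previous corollary'' via a vacuous-padding embedding of $\mathcal{L}_{\Box\forall}$ into $\mathcal{L}_{\Box\exists^2}$ should be dropped or heavily qualified: padding $\Box\forall x$ yields $\Box\forall x\forall x'$, which is \emph{not} in the fragment generated by $\Box\exists x\exists x'\varphi$ under negation (that fragment's dual operator is $\Diamond\forall x\forall x'$, not $\Box\forall x\forall x'$), and in any case undecidability of the fragment would not establish the stated biconditional about the specific formula $\varphi_\alpha^*$. (That $\gamma_n^*$ itself begins with $\Box\forall z_1\forall z_2$ is a looseness in the paper's definition of the fragment, not a licence for the embedding.) Second, what you have is a sketch: the path-independence lemma is asserted rather than derived from $\gamma_n^*$ (one must check that uniformity is obtained among all pre-tail worlds reachable from a fixed depth-one successor, which is where $\Box\exists z\exists z'\psi_\alpha^*$ is evaluated), and the depth bookkeeping has small slips ($Tr(\beta)$ sits at modal depth $n+1$, not $n$, once the padding layer is counted, and the formula whose truth must be branch-independent is evaluated at the pre-tail worlds, not the tails). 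These are repairable, but a full write-up needs them fixed.
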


\bibliography{main}

\begin{thebibliography}{1}

\bibitem{Padmanabha2018}
Anantha Padmanabha, R.~Ramanujam, and Yanjing Wang.
\newblock {Bundled fragments of first-order modal logic: (un)decidability}.
\newblock In {\em Proceedings of FSTTCS '18}, 2018.

\bibitem{Wang2017A}
Yanjing Wang.
\newblock A new modal framework for epistemic logic.
\newblock {\em Electronic Proceedings in Theoretical Computer Science},
  251:515--534, 07 2017.

\end{thebibliography}

\end{document}